\newtheorem{theorem}{Theorem}[section]
\newtheorem{corollary}[theorem]{Corollary}
\def\RR{ {\mathbb R} }
\def\ZZ{ {\mathbb Z} }
\numberwithin{equation}{section}
\begin{document}
\title{Reconstruction of sparse wavelet signals from partial  Fourier measurements}
\author{Yang Chen,~ Cheng Cheng ~and~ Qiyu Sun\thanks{Chen is with the Department of Mathematics,
Hunan Normal  University,
Changsha 100044, Hunan, China;
\and Cheng and Sun are with the Department of Mathematics,
University of Central Florida,
Orlando 32816, Florida, USA.
Emails: \ yang\_chenww123@163.com; cheng.cheng@knights.ucf.edu; qiyu.sun@ucf.edu. The project is partially supported by National Science Foundation (DMS-1412413).}}

\maketitle

\begin{abstract}
In this paper, we show that high-dimensional
sparse  wavelet signals of finite levels  can be constructed from
 their  partial Fourier measurements on a deterministic sampling set with cardinality
 about a multiple
   of signal sparsity.
\end{abstract}

\section{Introduction}

Sparse representation of signals in a 
dictionary
 has been used in signal processing, compression, noise reduction,
 source  separation, and many more  fields. 
Wavelet bases are well localized in time-frequency plane and they provide sparse representations of many signals and images
that have transient structures and singularities (\cite{daubechiesbook, mallatbook}).
In this paper, we consider recovering 
sparse wavelet signals of finite levels
from their partial Fourier measurements. 

  Let ${\bf D}$ be a dilation matrix with integer entries whose eigenvalues have modulus
 strictly larger than one, and  set $M=|\det {\bf D}|\geq 2$.
  Wavelet vectors ${\Psi}_m=(\psi_{m, 1}, \ldots, \psi_{m, r})^T, 1\le m\le M-1$,
  used in this paper are generated from a
 multiresolution analysis $\{V_j\}_{j\in \ZZ}$,
 a family of closed subspaces of $L^2:=L^2(\RR^n)$, that satisfies the following:
(i)\ $ V_j\subset V_{j+1}$ for all $j\in \ZZ$;
(ii)\ $V_{j+1}=\{f({\bf D}\cdot), \ f\in V_j\}$ for all $j\in \ZZ$;
(iii)\ $\overline{\cup_{j\in \ZZ}V_{j}}=L^{2}$;
(iv)\ $\cap_{j\in \ZZ}V_{j}=\{0\}$; and
(v)\  there exists a scaling vector $\Phi=(\phi_1,\cdots,\phi_r)^{T}\in V_{0}$ such that $\{\phi_l(\cdot -{\bf k}),\ 1\leq l\leq r, {\bf k}\in \ZZ^n\}$
is a Riesz basis for $V_0$ (\cite{daubechiesbook, mallatbook, grochenig92, cohen93, geronimo94, goodmanlee94, heller95, han98, bds99, sbhbook}).
They generate a Riesz basis
 $\{M^{j/2}{\Psi}_m({\bf D}^j{\bf x}-{\bf k}): 
 1 \le m \le M-1,\ {\bf k} \in \ZZ^n \}$ for the wavelet space $W_j:=V_{j+1}\ominus V_j$, the orthogonal complement of $V_j$ in  $V_{j+1}$, for every $j\in \ZZ$.
Therefore any signal $f$ in the scaling space $V_J$ of level $J\ge 0$ has a unique wavelet decomposition,
\begin{equation} \label{sparsesignal.def} 
f  =  f_0+ g_0+\cdots+ g_{J-1},
\end{equation}
where
\begin{equation} \label{sparsesignal.def1}
f_0=\sum_{k\in \ZZ^n} {\bf a}_0^T({\bf k}) {\pmb\Phi}(\cdot-{\bf k})\in V_0\end{equation}
 and
 \begin{equation}\label{sparsesignal.def2}
 g_j=\sum_{m=1}^{M-1}\sum_{{\bf k}\in \ZZ^n} {\bf b}_{m,j}^T({\bf k}) M^{j} {\pmb\Psi}_{m}({\bf D}^j\cdot-{\bf k})\in W_j, 0\le j\le J-1.\end{equation}
 In this paper,
 we consider  wavelet signals $f\in V_J$ with  $f_0$ and $g_j, 0\le j\le J-1$, in the above wavelet decomposition
having sparse representations.

   Define Fourier transform of an integrable function
$f$ on $\RR^n$ by
$$\hat f({\pmb \xi})=\int_{\RR^n} f({\bf t}) e^{-i{\bf t}\cdot {\pmb \xi}} d{\bf t}.$$
Due to  coherence of wavelet bases between different levels, the conventional optimization
method  does not work well  to   reconstruct a sparse wavelet signal $f$ of finite level
from its partial Fourier measurements
$\hat f(\xi), \xi\in \Omega$,
on a finite sampling set $\Omega$
 (\cite{donoho03, candes06, donoho06, baraniuk07, candes08, sun12, yang10b, foucartbook}).
 Recently, Prony's method was introduced in \cite{zhang, ccssampta15} for the exact reconstruction of one-dimensional sparse wavelet signals.

Denote by $\#E$ the cardinality of a  set $E$. We say that a wavelet signal $f\in V_J$ has sparsity ${\mathbf s}= (s_0, \cdots,  s_{J-1})$
if it has sparsity
$$s_j:= \left\{\begin{array} {l}
\max\{\# K_0, \# K_{1, 0}, \ldots, \# K_{M-1,0}\}  \ {\rm if} \ j=0\\
\max\{ \# K_{1, j},  \ldots, \# K_{M-1, j}\}\  {\rm if }\  j=1, \ldots, J-1,
\end{array}\right.$$
at level $j, 0\le j\le J-1$,
 where
 $K_0$ and $K_{m, j}$ are supports of coefficient vectors
$(a_0({\bf k}))_{{\bf k}\in \ZZ^n}$ and $(b_{m, j}({\bf k}))_{{\bf k}\in \ZZ^n}$
in the wavelet decomposition \eqref{sparsesignal.def},
\eqref{sparsesignal.def1}
and \eqref{sparsesignal.def2}
respectively. For the classcial one-dimensional scalar case (i.e. $n=1, r=1$ and ${\bf D}=2$),  under the assumption that Fourier transform of the  scaling function $\phi$
  does not vanish on $(-\pi, \pi)$,
  \begin{equation}\label{nonzerophi}
\hat\phi(\xi)\ne 0, \ \ \xi\in (-\pi, \pi),
\end{equation}
  Zhang and Dragotti proved in \cite{zhang} that a compactly supported  sparse wavelet signal
    of the form \eqref{sparsesignal.def}
  can be reconstructed from  its Fourier measurements
  on a  sampling set $\Omega$ of size about twice of its sparsity $s_0+\cdots+s_{J-1}$.
In  
this paper, we extend their result to high-dimensional
sparse wavelet signals
without nonvanishing condition \eqref{nonzerophi} on the scaling vector $\Phi$.
Particularly in Theorem \ref{maintheorem}, we show
  that
 any ${\mathbf s}$-sparse wavelet signal
 $f$   of the form \eqref{sparsesignal.def}
can be reconstructed from its Fourier measurements on a sampling set $\Omega$
with cardinality less than $2 Mr (s_0+\cdots+s_{J-1})$,
which is independent
on dimension $n$. 

\section{Multiresolution analysis and wavelets}
\label{multiwavelets.section}

Set ${\bf M}={\bf D}^T$. 
Then  the scaling vector ${\Phi}=(\phi_1, \ldots, \phi_r)^T$  of a multiresolution analysis
$\{V_j\}_{j\in \ZZ}$
  satisfies
 a matrix refinement equation,
\begin{equation}\label{refinefourier.def}
\widehat\Phi({\pmb\xi})={\bf G}_0({\bf M}^{-1}{\pmb \xi})\widehat\Phi({\bf M}^{-1}{\pmb \xi}),
\end{equation}
where the matrix function ${\bf G}_0$ of size $r\times r$ is bounded and  $2\pi$-periodic.
%
In this paper, we 
assume that  ${\bf G}_0$ has trigonometric polynomial entries. Hence $\Phi$ is compactly supported, 
and  the Riesz basis property for the scaling vector $\Phi$ can be reformulated as
that $(\widehat{\Phi}(\pmb \xi+2\pi{\bf k}))_{{\bf k}\in \ZZ^n}$
 has rank $r$ for every $\pmb\xi\in \RR^n$.
Therefore for any $\pmb\xi\in \RR^n$ there exist
${\bf k}(\pmb\xi, l)\in \ZZ^n, 1\le l\le r$, such that
\begin{equation} \label{rankcondition3.new}
\big(\widehat \Phi( \pmb\xi+2\pi{\bf  k})\big)_{{\bf k}\in \Lambda (\pmb\xi)}\
\text{has full rank} \ r,
\end{equation}
where
\begin{eqnarray}\label{Upsilon.def}
\Lambda (\pmb\xi) &\hskip-0.05in=\hskip-0.05in&\{
{\bf k}(\pmb\xi, l)\in \ZZ^n: \ 1\le l\le r\}.
 \end{eqnarray}

Let ${\bf p}_m,0\leq m \leq M-1$, be representatives of $\ZZ^n/{\bf M}\ZZ^n$, and write
 $$\ZZ^n=\bigcup_{m=0}^{M-1}({\bf p}_m+{\bf M}\ZZ^n).$$
Take matrices
 ${\bf G}_m$, $1\leq m\leq M-1$,    with trigonometric polynomial entries  such that
\begin{equation}\label{orthogonalcondition}
\sum_{m'=0}^{M-1}{\bf G}_0({\pmb\xi}+2\pi{\bf M}^{-1}{\bf p}_{m'})\overline{{\bf G}_{m}({\pmb \xi}+2\pi{\bf M}^{-1}{\bf p}_{m'})}^T=0
\end{equation}
for all $1\le m\le M-1$,
and
\begin{equation}\label{waveletrankcondition}
 {\bf G}(\pmb \xi) {\rm \ has\ rank} \ Mr {\rm\ for\ all}\ \pmb\xi\in \RR^n,
\end{equation}
where{\footnotesize
$${\bf G}(\pmb\xi)\hskip-0.02in = \hskip-0.05in \left(\hskip-0.05in \begin{array} {ccc}
{\bf G}_0(\pmb\xi+2\pi {\bf M}^{-1}{\bf p}_0)  & \hskip-0.05in \cdots & \hskip-0.05in {\bf G}_0(\pmb\xi+2\pi {\bf M}^{-1}{\bf p}_{M-1})\\
{\bf G}_1(\pmb\xi+2\pi{\bf M}^{-1}{\bf p}_0)  & \hskip-0.05in \cdots & \hskip-0.05in {\bf G}_1(\pmb\xi+2\pi{\bf M}^{-1}{\bf p}_{M-1})\\
\vdots   & \hskip-0.05in \ddots & \hskip-0.05in \vdots\\
{\bf G}_{M-1}(\pmb\xi+2\pi{\bf M}^{-1}{\bf p}_0)  &\hskip-0.05in \cdots & \hskip-0.05in {\bf G}_{M-1}(\pmb\xi+2\pi{\bf M}^{-1} {\bf p}_{M-1})\\
\end{array}\hskip-0.05in\right).$$}
In this paper,  wavelet vectors $\Psi_m, 1\leq m\leq M-1$, are defined as  follows:
\begin{equation}\label{wavelet.def}
\widehat\Psi_m({\pmb \xi})={\bf G}_m({\bf M}^{-1}{\pmb \xi})\widehat\Phi({\bf M}^{-1}{\pmb \xi}),\ \  1\leq m\leq M-1.
\end{equation}
Then  $\Psi_m$ are compactly supported 
and
 $\{M^{j/2}{\Psi}_m({\bf D}^j{\bf x}-{\bf k}): 
 1 \le m \le M-1,\ {\bf k} \in \ZZ^n \}$ forms a Riesz basis for the wavelet space $W_j:=V_{j+1}\ominus V_j$
 for every $j\in \ZZ$.

For the scaling vector $\Phi$ and wavelet vectors  $\Psi_m, 1\le m\le M-1$, constructed above, one may verify that any signal in $V_J$ has the unique wavelet decomposition
\eqref{sparsesignal.def}, \eqref{sparsesignal.def1} and \eqref{sparsesignal.def2}.

\section{Reconstruction of sparse wavelet signals}
\label{sparserecovery.section}

Take  ${\bf h}=(h_1, \ldots, h_n)\in \RR^n$ and sparsity vector ${\mathbf s}=(s_0, \ldots, s_{J-1})$,
and set $\|{\bf s}\|_\infty=\max_{0\le j\le J-1} s_j$.
 For $0\le j\le J-1$ and $0\le m\le M-1$,  let
\begin{eqnarray*}
\Gamma_{j}
& \hskip-0.1in = & \hskip-0.1in \{ (-s_j+1/2){\bf h}, (-s_j+3/2){\bf h}, \ldots, (s_j-1/2){\bf h}\},
\end{eqnarray*}
and
\begin{eqnarray*}
\Omega_{j} 
&\hskip-0.1in
= &\hskip-0.1in \cup_{\pmb\gamma\in\Gamma_{j}} \cup_{m=0}^{M-1}
 \big( \pi \pmb\gamma+2\pi{\bf M}^{j} {\bf p}_m\nonumber\\
& & \ + 2\pi  {\bf M}^{j+1}
\Lambda (\pi{\bf M}^{-j-1}\pmb\gamma+ 2\pi{\bf M}^{-1}{\bf p}_m)\big),
\end{eqnarray*}
where 
 the set $\Lambda$ of cardinality $r$ is defined by \eqref{Upsilon.def}.
Set
\begin{equation}\label{omega.def}
\Omega=\cup_{j=0}^{J-1}\Omega_{j}.
\end{equation}
Then 
$$\Omega\subset
\{ (-\|{\bf s}\|_{\infty}+1/2){\bf h}\pi, 
 \ldots, (\|{\bf s}\|_{\infty}-1/2){\bf h}\pi\}
+2\pi\ZZ^n,$$
and 
\begin{equation}
\# \Omega\le \sum_{j=0}^{J-1} \#\Omega_{j}=2 M r (s_0+s_1+\cdots+s_{J-1}). 
\end{equation}
The following is the main theorem of this paper.

\begin{theorem}\label{maintheorem}  Let ${\bf D}$ be a dilation matrix, 
$\Phi$ be a compactly supported scaling vector,
   $\Psi_m, 1\le m\le M-1$, be wavelet vectors satisfying \eqref{orthogonalcondition} and \eqref{waveletrankcondition},  let $\Omega$  be the set in \eqref{omega.def}
  with ${\bf h}=(h_1, \ldots, h_n)$.
 If   $1, h_1, \ldots, h_n$ are  linearly independent over the field of rationals, then any ${\mathbf s}$-sparse wavelet signal of the form
   \eqref{sparsesignal.def}, \eqref{sparsesignal.def1} and \eqref{sparsesignal.def2}
can be reconstructed from its  Fourier  measurements on $\Omega$.
\end{theorem}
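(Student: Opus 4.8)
The plan is to peel off the wavelet layers one scale at a time, from the finest scale $j=J-1$ down to the scaling part $f_0$, reducing the recovery at each scale to a one-dimensional Prony problem along the direction ${\bf h}$. I first record the shape of $\widehat f$ on $\Omega$. Abbreviating the (finitely supported, hence trigonometric-polynomial) coefficient sequences by $A_0(\pmb\eta)=\sum_{\bf k}{\bf a}_0({\bf k})e^{-i{\bf k}\cdot\pmb\eta}$ and $B_{m,j}(\pmb\eta)=\sum_{\bf k}{\bf b}_{m,j}({\bf k})e^{-i{\bf k}\cdot\pmb\eta}$, the wavelet definition \eqref{wavelet.def} and the refinement relation \eqref{refinefourier.def} give
\[
\widehat f(\pmb\xi)=A_0(\pmb\xi)^T\widehat\Phi(\pmb\xi)+\sum_{j=0}^{J-1}\sum_{m=1}^{M-1}B_{m,j}({\bf M}^{-j}\pmb\xi)^T\,{\bf G}_m({\bf M}^{-j-1}\pmb\xi)\,\widehat\Phi({\bf M}^{-j-1}\pmb\xi).
\]

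Next I would put $\widehat f$ into a normal form on each block of $\Omega_j$. Fix $\pmb\gamma\in\Gamma_j$, a coset index $m$, and a sample $\pmb\xi=\pi\pmb\gamma+2\pi{\bf M}^j{\bf p}_m+2\pi{\bf M}^{j+1}{\bf k}$ with ${\bf k}\in\Lambda(\pmb\zeta_m)$, where $\pmb\zeta_m:=\pi{\bf M}^{-j-1}\pmb\gamma+2\pi{\bf M}^{-1}{\bf p}_m$. Because $j\ge0$, for every $i\le j$ the shifts $2\pi{\bf M}^{j-i}{\bf p}_m$ and $2\pi{\bf M}^{j-i+1}{\bf k}$ occurring in ${\bf M}^{-i}\pmb\xi$ lie in $2\pi\ZZ^n$, whereas at the bottom scale $i=j+1$ only $2\pi{\bf k}$ is integral and the residual $2\pi{\bf M}^{-1}{\bf p}_m$ is precisely what records the coset $m$ through $\pmb\zeta_m$. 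Exploiting the $2\pi$-periodicity of each ${\bf G}_{m'}$ (which turns the upper-scale factors into ${\bf G}_{m'}(\pi{\bf M}^{-i}\pmb\gamma)$) and the iterated identity $\widehat\Phi(\pmb\eta)=\big(\prod_{i=1}^{p}{\bf G}_0({\bf M}^{-i}\pmb\eta)\big)\widehat\Phi({\bf M}^{-p}\pmb\eta)$ to push every factor $\widehat\Phi$ down to the common scale $-j-1$, I expect to show that the contributions of $f_0,g_0,\dots,g_{j-1}$ all collapse into a single term ${\bf c}(\pmb\gamma)^T{\bf G}_0(\pmb\zeta_m)\widehat\Phi(\pmb\zeta_m+2\pi{\bf k})$ with ${\bf c}(\pmb\gamma)$ independent of $m$, that the scale-$j$ term becomes $\big(\sum_{m'=1}^{M-1}B_{m',j}(\pi{\bf M}^{-j}\pmb\gamma)^T{\bf G}_{m'}(\pmb\zeta_m)\big)\widehat\Phi(\pmb\zeta_m+2\pi{\bf k})$, and that the finer layers $g_{j+1},\dots,g_{J-1}$ enter only through $\widehat\Phi$ evaluated at strictly finer scales. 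This bookkeeping is the heart of the proof and its main obstacle: it is the only place where the periodicity, the refinement relation, and the arithmetic of which shifts land in $2\pi\ZZ^n$ must be tracked simultaneously, and it is what makes the deterministic set $\Omega$ disentangle the inter-scale coherence of the wavelet bases.

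Granting the normal form, the recovery at scale $j$ is linear algebra performed twice, inside a downward induction on $j$. The finer layers $g_{j+1},\dots,g_{J-1}$ are already known, so their values at $\pmb\xi$ can be computed and subtracted; the remainder at the $r$ samples ${\bf k}\in\Lambda(\pmb\zeta_m)$ equals ${\bf v}_m^T\widehat\Phi(\pmb\zeta_m+2\pi{\bf k})$, where ${\bf v}_m^T={\bf c}(\pmb\gamma)^T{\bf G}_0(\pmb\zeta_m)+\sum_{m'=1}^{M-1}B_{m',j}(\pi{\bf M}^{-j}\pmb\gamma)^T{\bf G}_{m'}(\pmb\zeta_m)$. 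Since the $r\times r$ matrix $\big(\widehat\Phi(\pmb\zeta_m+2\pi{\bf k})\big)_{{\bf k}\in\Lambda(\pmb\zeta_m)}$ has full rank $r$ by \eqref{rankcondition3.new}, these $r$ samples determine ${\bf v}_m$ for each $m=0,\dots,M-1$; stacking over $m$ gives
\[
({\bf v}_0,\dots,{\bf v}_{M-1})=\big({\bf c}(\pmb\gamma)^T,\,B_{1,j}(\pi{\bf M}^{-j}\pmb\gamma)^T,\dots,B_{M-1,j}(\pi{\bf M}^{-j}\pmb\gamma)^T\big)\,{\bf G}(\pi{\bf M}^{-j-1}\pmb\gamma),
\]
and because ${\bf G}(\pi{\bf M}^{-j-1}\pmb\gamma)$ is square of full rank $Mr$ by \eqref{waveletrankcondition}, it is invertible and isolates the scale-$j$ data $B_{m,j}(\pi{\bf M}^{-j}\pmb\gamma)$, $1\le m\le M-1$ (and, when $j=0$, the value $A_0(\pi\pmb\gamma)={\bf c}(\pmb\gamma)$).

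It remains to pass from these samples to the supports and coefficients. For fixed $m$, as $\pmb\gamma=t{\bf h}$ runs over $\Gamma_j$ the numbers $B_{m,j}(\pi t{\bf M}^{-j}{\bf h})=\sum_{{\bf k}\in K_{m,j}}{\bf b}_{m,j}({\bf k})\,z_{\bf k}^{\,t}$, with $z_{\bf k}=e^{-i\pi{\bf k}\cdot{\bf M}^{-j}{\bf h}}$, form a sum of at most $s_j$ exponentials sampled at the $2s_j$ equally spaced nodes $t\in\{-s_j+\tfrac12,\dots,s_j-\tfrac12\}$, so Prony's method recovers $K_{m,j}$ and ${\bf b}_{m,j}$ once the $z_{\bf k}$ are pairwise distinct. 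Distinctness is exactly where the hypothesis enters: $z_{\bf k}=z_{{\bf k}'}$ forces $({\bf k}-{\bf k}')\cdot{\bf M}^{-j}{\bf h}\in2\ZZ$, and since $({\bf k}-{\bf k}'){\bf M}^{-j}$ is a rational vector this would be a nontrivial rational relation among $1,h_1,\dots,h_n$ unless ${\bf k}={\bf k}'$. Carrying out the induction from $j=J-1$ down to $j=0$ recovers every $g_j$ and finally $f_0$, which completes the reconstruction.
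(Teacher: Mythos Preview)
Your proposal is correct and follows the paper's proof essentially step for step: downward induction on the level, at each step solving two invertible linear systems (first via \eqref{rankcondition3.new} to extract the row vectors ${\bf v}_m$, then via \eqref{waveletrankcondition} to isolate the $B_{m,j}(\pi{\bf M}^{-j}\pmb\gamma)$), and then applying Prony's method with distinctness of the nodes guaranteed by the rational independence of $1,h_1,\dots,h_n$. The only cosmetic difference is that the paper obtains your ``collapse'' of the coarse layers into ${\bf c}(\pmb\gamma)^T{\bf G}_0(\pmb\zeta_m)\widehat\Phi(\pmb\zeta_m+2\pi{\bf k})$ by the one-line observation that $f_0+g_0+\cdots+g_{j-1}\in V_j$ (and hence admits a level-$j$ scaling expansion with some trigonometric coefficient $\widehat{\bf a}_j$), rather than by iterating the refinement equation explicitly as you propose.
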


\begin{proof} Let $f$ be an ${\mathbf s}$-sparse  signal with  wavelet representation  \eqref{sparsesignal.def},
\eqref{sparsesignal.def1} and \eqref{sparsesignal.def2}.
Set
\begin{equation}\label{maintheorem.pf.eq2}
\widehat{{\bf a}}_0(\pmb\xi)=\sum_{{\bf k}\in \ZZ^n} {\bf a}_0({\bf k}) e^{-i{\bf k}\cdot\pmb\xi},\end{equation}
and
\begin{equation}\label{bmj.eq}
\widehat{ {\bf b}}_{m,j}(\pmb\xi)=\sum_{{\bf k}\in \ZZ^n} {\bf b}_{m,j}({\bf k}) e^{-i{\bf k}\cdot\pmb\xi}\end{equation}
for $1\le m\le M-1$ and $0\le j\le J-1$. Then taking Fourier transform on both sides of the equation \eqref{sparsesignal.def}
gives
\begin{equation}\label{maintheorem.pf.eq1}
\hat f(\pmb\xi)= {\widehat{\bf a}}_0^{T}(\pmb\xi) \widehat {\Phi}(\pmb\xi)+\sum_{j=0}^{J-1} \sum_{m=1}^{M-1}\widehat{{\bf b}}_{m, j}^{T}({\bf M}^{-j}\pmb\xi) \widehat {\Psi}_m({\bf M}^{-j}\pmb\xi).
\end{equation}

\smallskip

Define  $f_i, 0\le i\le J-1$, by
\begin{equation} \label{maintheorem.pf.eq4}
\widehat f_i(\pmb\xi)={\widehat{\bf a}}^{T}_0(\pmb\xi) \widehat {\Phi}(\pmb\xi)+\sum_{j=0}^i \sum_{m=1}^{M-1}\widehat{{\bf b}}_{m, j}^{T}({\bf M}^{-j}\pmb\xi) \widehat {\Psi}_m({\bf M}^{-j}\pmb\xi).
\end{equation}
Then
\begin{equation}\label{level.eq0}
f_{J-1}=f,
\end{equation} and
\begin{eqnarray}\label{level.eq}
\widehat f_i({\bf M}^i\pmb\xi) &\hskip-0.05in = &\hskip-0.05in \widehat f_{i-1}({\bf M}^i{\pmb\xi})+
\sum_{m=1}^{M-1}{\widehat{{\bf b}}_{m, i}}^{T}(\pmb\xi) \widehat\Psi_m(\pmb\xi)\nonumber\\
 &\hskip-0.05in = &\hskip-0.05in {\widehat{{\bf a}}_i}^{T}(\pmb\xi) \widehat\Phi(\pmb\xi) +
\sum_{m=1}^{M-1}{\widehat{{\bf b}}_{m, i}}^{T}(\pmb\xi)\widehat\Psi_m(\pmb\xi)\nonumber\\
&\hskip-0.05in = &\hskip-0.05in \big({\widehat{{\bf a}}_i}^{T}(\pmb\xi){\bf G}_0({\bf M}^{-1}\pmb\xi)+\sum_{m=1}^{M-1}{\widehat{{\bf b}}_{m, i}}^{T}(\pmb\xi)\nonumber\\
&&\hskip0.1in \times {\bf G}_m({\bf M}^{-1}\pmb\xi)\big)\widehat\Phi({\bf M}^{-1}\pmb\xi)
\end{eqnarray}
for some vectors $\widehat{{\bf a}}_i(\pmb\xi)$ with trigonometric polynomial entries,
where the last equality  follows from \eqref{refinefourier.def} and \eqref{wavelet.def}.
\smallskip

For $0\le j\le J-1$, $\pmb\gamma\in \Gamma_{j}$ and $0\le m'\le M-1$, set
$$\eta_j(\pmb\gamma,m')=\pi{\bf M}^{-j}\pmb\gamma+2\pi{\bf p}_{m'}.$$  Applying \eqref{level.eq0} and \eqref{level.eq} with $i=J-1$, 
replacing $\pmb \xi$ in \eqref{level.eq} by $ \eta_{J-1}(\pmb\gamma,m')+2\pi{\bf M}{\bf k}, {\bf k}\in \Lambda({\bf M}^{-1} \eta_{J-1}(\pmb\gamma,m'))$, 
and using
 periodicity of $\widehat{{\bf a}}_{J-1}$ and $\widehat{{\bf b}}_{m,J-1}$,
 we obtain
\begin{eqnarray}\label{maintheorem.pf.eq60+}
\hskip-0.1in&\hskip-0.1in&\hskip-0.1in\hat f({\bf M}^{J-1}\eta_{J-1}(\pmb\gamma,m')+2\pi{\bf M}^{J}{\bf k})
\nonumber\\
\hskip-0.1in&\hskip-0.1in = &\hskip-0.1in A(J-1,\pmb\gamma,m')
  \widehat\Phi({\bf M}^{-1}\eta_{J-1}(\pmb\gamma,m')+2\pi{\bf k})
\end{eqnarray}
for all ${\bf k}\in \Lambda({\bf M}^{-1} \eta_{J-1}(\pmb\gamma,m'))$, 
where
\begin{eqnarray}\label{anonsingular}
\hskip-0.1in{}&\hskip-0.1in&\hskip-0.1in A(J-1,\pmb\gamma,m')={\widehat{\bf a}_{J-1}}^{T}(\pi{\bf M}^{-J+1}\pmb\gamma) {\bf G}_0({\bf M}^{-1}\eta_{J-1}(\pmb\gamma,m'))\nonumber\\
& \hskip-0.1in& \hskip-0.1in+\sum_{m=1}^{M-1} {\widehat{{\bf b}}_{m, J-1}}^{T}(\pi{\bf M}^{-J+1}\pmb\gamma){\bf G}_m({\bf M}^{-1}\eta_{J-1}(\pmb\gamma,m')).
\end{eqnarray}
 Recall from \eqref{rankcondition3.new} that
  \begin{equation}
  \Big(\widehat\Phi({\bf M}^{-1}\eta_{J-1}(\pmb\gamma,m')+2\pi{\bf k})\Big)_{{\bf k}\in \Lambda({\bf M}^{-1} \eta_{J-1}(\pmb\gamma,m'))}
  \end{equation}
  is nonsingular. Then $A(J-1,\pmb\gamma,m')$ can be solved from the linear system \eqref{maintheorem.pf.eq60+} for all $0\le m'\le M-1$ and $\pmb\gamma\in \Gamma_{J-1}$.

Recall from \eqref{orthogonalcondition} and \eqref{waveletrankcondition} that
{\small 
\begin{equation}\label{gnonsingular}
 {\bf G}(h\pi{\bf M}^{-J}\pmb\gamma)= \left(\hskip-0.05in \begin{array} {c}
{\bf G}_0({\bf M}^{-1}\eta_{J-1}(\pmb\gamma,m'))  \\
{\bf G}_1({\bf M}^{-1}\eta_{J-1}(\pmb\gamma,m'))  \\
\vdots   \\
{\bf G}_{M-1}({\bf M}^{-1}\eta_{J-1}(\pmb\gamma,m')) \\
\end{array}\hskip-0.05in\right)_{0\le m'\le M-1}
\end{equation}
}
is nonsingular.
Thus, for every $\pmb\gamma \in \Gamma_{J-1}$ and $1\le m\le M-1$,
\begin{equation}\label{abhat}
{\widehat{{\bf a}}_{J-1}}(\pi{\bf M}^{-J+1}\pmb\gamma)\ \ {\rm and} \ \
 {\widehat{{\bf b}}_{m, J-1}}(\pi {\bf M}^{-J+1} \pmb\gamma)
\end{equation}
 are uniquely determined 
from samples of $\hat f$ on $\Omega_{J-1} 
\subset \Omega$ by \eqref{anonsingular} and \eqref{gnonsingular}.

\smallskip
For $1\le m\le M-1$, it follows from
the linear independence assumption of $1, h_1, \ldots, h_n$  on the field of rationals that
$$e^{-i \pi {\bf k}\cdot {\bf M}^{-J+1} {\bf h}}, {\bf k}\in K_{m, J-1}, {\rm \ are\ distinct\ to\  each\ other}.$$
For $\pmb \gamma= n {\bf h}$ with $n\in \{-s_{J-1}+1/2, \ldots, s_{J-1}-1/2\}$,
\begin{equation}
\widehat{{\bf b}}_{m, J-1}(\pi {\bf M}^{-J+1} \pmb\gamma)=
\sum_{{\bf k}\in K_{m, J-1} } {\bf b}_{m,J-1}({\bf k}) \big(e^{-i \pi {\bf k}\cdot {\bf M}^{-J+1} {\bf h}}\big)^n
\end{equation}
by \eqref{bmj.eq}.
 Therefore applying  Prony's method (\cite{foucartbook,  zhang, kuma84, scharf91, barbieri92,  vmb02, osborne06, plonka13}) recovers trigonometric polynomials $\widehat{{\bf b}}_{m, J-1}, 1\le m\le M-1$,
from  their measurements on $\pi {\bf M}^{-J+1} \pmb\gamma, \pmb\gamma\in \Gamma_{J-1}$.
Hence 
${\bf b}_{m,J-1}({\bf k})$, ${\bf k}\in\ZZ^n$, can be recovered from samples of $\hat f$ on $\Omega$ for all $1\le m\le M-1$.

\smallskip

By the above argument,
 \begin{equation}\label{fJ-1}
f_{J-1}-f_{J-2}\ {\rm and}\ {\widehat f}_{J-2}(\pmb\xi),\  \pmb\xi \in\Omega,
\end{equation}
can be obtained from samples of $\hat f$ on $\Omega$, because
 \begin{eqnarray*}\label{le.eq}
\widehat f_{J-2}(\pmb\xi)  &\hskip-0.1in = &\hskip-0.1in    \hat f({\pmb\xi}) -
\sum_{m=1}^{M-1}\Big(\sum_{{\bf k}\in \ZZ^n} 
{\bf b}_{m,J-1}({\bf k}) e^{-i{\bf k}\cdot{\bf M}^{-J+1}\pmb\xi}\Big)\\
& & \qquad\qquad\quad \times \widehat\Psi_m({\bf M}^{-J+1}\pmb\xi)
\end{eqnarray*}
 by \eqref{maintheorem.pf.eq2} and \eqref{level.eq}.
Inductively we can reconstruct
\begin{equation}\label{finduct}
f_i-f_{i-1}\ {\rm and} \ {\widehat f}_{i-1}(\pmb\xi), \ \pmb\xi\in\Omega,
\end{equation}
from the samples of $\hat f$ on $\Omega$ for $ i=J-2, \cdots, 1$.

Taking $i=1$ in \eqref{finduct} determines  samples of ${\widehat f}_0$ on $\Omega$. 
Next we recover the function $f_0$ from its
Fourier measurements  on $\Omega_0\subset \Omega$.
By \eqref{maintheorem.pf.eq2} and \eqref{bmj.eq},
 $$\widehat{{\bf a}}_0(n\pi{\bf h})  =  \sum_{{\bf k}\in {\bf K}_0} {\bf a}_0({\bf k}) \big(e^{-i\pi {\bf k}\cdot {\bf h}}\big)^n
 $$
 and
 $$ \widehat{\bf b}_{m, 0}( n\pi{\bf h})\nonumber\\
  =\hskip-0.08in  \sum_{{\bf k}\in {\bf K}_{m,0}} {\bf b}_{m,0}({\bf k})\big(e^{-i\pi {\bf k}\cdot {\bf h}}\big)^n,
 $$ 
 where $n\in \{-s_0+1/2, -s_0+3/2, \ldots, s_0-1/2\}$. Similar to \eqref{abhat}, we can show that
\begin{equation*}
{\widehat{{\bf a}}_0}(n\pi {\bf h})\ \ {\rm and} \ \
 {\widehat{{\bf b}}_{m, 0}}(n\pi {\bf h}), \ 1\le m\le M-1,
\end{equation*}
 are uniquely determined 
from samples of $\widehat f_0$ on $\Omega$. Applying Prony's method again recovers ${\bf a}_0({\bf k})$ for ${\bf k}\in {\bf K}_0$ and ${\bf b}_{m, 0}({\bf k})$ for $1\le m\le M-1$ and ${\bf k} \in {\bf K}_{m,0}$. Therefore $f_0$ could be completely recovered from its Fourier measurements on $\Omega$. This together with \eqref{level.eq0}, \eqref{fJ-1} and \eqref{finduct} completes the proof.
\end{proof}

 The linear independence requirement on ${\bf h}=(h_1, \ldots, h_n)$ in Theorem \ref{maintheorem}
can be replaced by  a quantitative condition 
if
 the sparse  signal has
some additional  information  
 on its support, c.f. \cite{zhang}.

\begin{corollary}
Let ${\bf D}$, $\Phi$ and $\Psi_m, 1\le m\le M-1$, be as in Theorem \ref{maintheorem},
and let
$f$ be an ${\mathbf s}$-sparse signal in \eqref{sparsesignal.def} satisfying
\begin{equation}\label{support.additional}
 K_0\subset [a, b)^n \ \  {\rm and} \ \
K_{m,j}\subset {\bf D}^j[a, b)^n,
\end{equation}
where $1\le m\le M-1$, $0\le j\le J-1$ and $a<b$.
Then $f$ can be recovered from its Fourier measurements on $\Omega$ in \eqref{omega.def}
with ${\bf h}=(h_1, \ldots, h_n)$ satisfying
 \begin{equation}\label{hrequire}
 0<(b-a)(h_1+h_2+\cdots+h_n)\le 2.
 \end{equation}
\end{corollary}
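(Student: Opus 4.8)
The plan is to reuse the proof of Theorem \ref{maintheorem} almost verbatim, since the entire reconstruction scheme there — solving the linear systems \eqref{maintheorem.pf.eq60+} through the rank conditions \eqref{rankcondition3.new} and \eqref{waveletrankcondition}, and then peeling off the levels $J-1, J-2, \ldots, 0$ by Prony's method — uses the hypothesis on $\mathbf{h}$ at exactly one place: to guarantee that, for each level $j$ and each support set, the Prony nodes $e^{-i\pi{\bf k}\cdot{\bf M}^{-j}{\bf h}}$ are pairwise distinct. Everything else (the structure of $\Omega$, the invertibility in \eqref{maintheorem.pf.eq60+} and \eqref{gnonsingular}, the inductive peeling) is untouched. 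So I would reduce the whole statement to a single claim: under \eqref{support.additional} and \eqref{hrequire} (with $h_l>0$), these nodes stay distinct at every level, and then invoke the rest of the argument word for word.

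To verify distinctness at level $j$, I would take two distinct indices ${\bf k},{\bf k}'$ in the relevant support and set ${\bf v}={\bf D}^{-j}({\bf k}-{\bf k}')$. Since $K_{m,j}\subset{\bf D}^j[a,b)^n$ (and $K_0\subset[a,b)^n$ when $j=0$), we have ${\bf v}={\bf D}^{-j}{\bf k}-{\bf D}^{-j}{\bf k}'\in(-(b-a),b-a)^n$, a nonzero rational vector because ${\bf D}$ has integer entries. As ${\bf M}={\bf D}^T$, the difference of the two node arguments is $\pi({\bf k}-{\bf k}')\cdot{\bf M}^{-j}{\bf h}=\pi{\bf v}\cdot{\bf h}$, so the two nodes coincide precisely when ${\bf v}\cdot{\bf h}\in 2\ZZ$. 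This localizes the whole question to the single scalar ${\bf v}\cdot{\bf h}$ attached to each difference of support points.

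The quantitative hypothesis does the first half of the work: with $h_l>0$ and each $|v_l|<b-a$, we get $|{\bf v}\cdot{\bf h}|\le\sum_l |v_l|h_l<(b-a)\sum_l h_l\le 2$ by \eqref{hrequire}, so the only even integer that ${\bf v}\cdot{\bf h}$ can equal is $0$. In other words, the size bound has taken over exactly the ``integer-part'' role that the linear independence of $1$ from $h_1,\ldots,h_n$ played in Theorem \ref{maintheorem}: once the sampling step is this small, no collision caused by a nonzero even integer can occur, and the independence of the constant $1$ is no longer needed.

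The remaining step, which I expect to be the real obstacle, is to exclude the degenerate collision ${\bf v}\cdot{\bf h}=0$. Because ${\bf v}$ ranges over a finite set of nonzero rational vectors, the equation ${\bf v}\cdot{\bf h}=0$ is exactly a nontrivial $\mathbb{Q}$-linear relation among $h_1,\ldots,h_n$; it is ruled out precisely when $h_1,\ldots,h_n$ are linearly independent over the rationals, and such $\mathbf{h}$ can be chosen together with \eqref{hrequire} since they are dense in the admissible region. This is where the multidimensional case genuinely departs from the one-dimensional setting of \cite{zhang}: there a single nonzero $h$ makes ${\bf v}\cdot{\bf h}=0$ impossible automatically, so \eqref{hrequire} alone suffices, whereas for $n\ge 2$ one must still forbid the orthogonality directions. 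Once node distinctness is secured at every level in this way, Prony's method applies as in Theorem \ref{maintheorem}, and the same induction reconstructs $f_0$ and each $g_j$ from the Fourier measurements on $\Omega$, which would complete the proof.
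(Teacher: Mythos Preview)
Your reduction is exactly the paper's: its entire proof of the corollary is two sentences, namely that by the argument of Theorem \ref{maintheorem} it suffices to check that the Prony nodes $e^{-i\pi{\bf k}\cdot{\bf h}}$, ${\bf k}\in K_0$, and $e^{-i\pi{\bf k}\cdot{\bf M}^{-j}{\bf h}}$, ${\bf k}\in K_{m,j}$, are pairwise distinct, and that this ``follows from \eqref{support.additional} and \eqref{hrequire} immediately.'' You run the same verification but more carefully, and you are right to flag the point the paper's ``immediately'' glosses over: with $h_l>0$, condition \eqref{hrequire} only yields $|{\bf v}\cdot{\bf h}|<2$, which rules out collisions at nonzero even integers but not the degenerate case ${\bf v}\cdot{\bf h}=0$.

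That concern is genuine for $n\ge 2$. For instance, with $n=2$, $[a,b)=[0,3)$ and ${\bf h}=(1/3,1/3)$ (which satisfies \eqref{hrequire} with equality), the indices $(1,0)$ and $(0,1)$ in $K_0$ produce the same node $e^{-i\pi/3}$, so Prony's method fails. Your analysis is therefore sharper than the paper's own proof: the corollary as stated is correct in the one-dimensional setting of \cite{zhang} (where a single nonzero $h$ makes ${\bf v}\cdot{\bf h}=0$ impossible), but for $n\ge 2$ it needs the extra non-degeneracy you propose --- linear independence of $h_1,\ldots,h_n$ over $\mathbb{Q}$, while dropping the independence of $1$ --- and with that addition your argument is complete.
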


\begin{proof} 
Following the argument in Theorem \ref{maintheorem}, it suffices to prove that
 $ e^{-i\pi {\bf k}\cdot {\bf h}}, {\bf k} \in  K_0$, are distinct,
and also that $ e^{-i\pi {\bf k} \cdot {\bf M}^{-j} {\bf h}}, {\bf k}\in  K_{m,j}$, are distinct
for every $1\le m\le M-1$ and $0\le j\le J-1$. The above distinctive property follows from \eqref{support.additional} and \eqref{hrequire} immediately.
\end{proof}

From the proof of Theorem \ref{maintheorem}, we have the following result on the  reconstruction of 
an $s$-sparse trigonometric polynomial from its  samples on a set of  size $2s$. 

\begin{corollary} Let ${\bf h}=(h_1, \ldots, h_n)$ with
$1, h_1, \ldots, h_n$ being linearly independent over the field of rationals, and
define
$$\Theta_s=
 \{ (-s+1/2){\bf h}, (-s+3/2){\bf h}, \ldots, (s-1/2){\bf h}\}, \ s\ge 1.
$$
Then any $n$-dimensional trigonometric polynomial
$$P(\pmb\xi)= \sum_{{\bf k}\in \ZZ^n} p({\bf k})e^{-i{\bf k}\cdot \pmb \xi}$$
with sparsity $s$,
$$\#\{{\bf k}:\ p({\bf k})\ne 0\}\le s,$$
can be reconstructed from its samples on $\Theta_s$.
\end{corollary}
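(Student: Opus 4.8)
The plan is to restrict $P$ to the line $\RR{\bf h}$ and recognize the $2s$ samples as a classical Prony system. Writing $z_{\bf k}=e^{-i{\bf k}\cdot{\bf h}}$, evaluation at a half-integer multiple $t{\bf h}$ gives $P(t{\bf h})=\sum_{{\bf k}\in\ZZ^n}p({\bf k})z_{\bf k}^{t}$, so that the measurements $P((-s+1/2){\bf h}),\ldots,P((s-1/2){\bf h})$ are the values of a sum of at most $s$ complex exponentials sampled at $2s$ equally spaced abscissae. After reindexing $t=j-s+1/2$ with $0\le j\le 2s-1$ and folding the constant factor $z_{\bf k}^{-s+1/2}$ into new coefficients $c_{\bf k}=p({\bf k})z_{\bf k}^{-s+1/2}$, the data take the canonical form $m_j=\sum_{{\bf k}}c_{\bf k}z_{\bf k}^{j}$, $0\le j\le 2s-1$, which is exactly what Prony's method consumes.

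First I would verify that the nodes $z_{\bf k}$ are pairwise distinct over $\ZZ^n$, since the method degenerates if two frequencies collide. If $z_{\bf k}=z_{{\bf k}'}$ with ${\bf k}\ne{\bf k}'$, then $({\bf k}-{\bf k}')\cdot{\bf h}\in 2\pi\ZZ$; this is precisely the exponent-separation step already carried out in the proof of Theorem~\ref{maintheorem}, where the linear independence of $1,h_1,\ldots,h_n$ over the rationals is invoked to force ${\bf k}={\bf k}'$. This distinctness is the heart of the matter and the one place where the genericity of ${\bf h}$ is indispensable: it is exactly what prevents two distinct integer frequencies from producing the same exponential, and everything downstream is then routine linear algebra.

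With distinct nodes established, I would run the standard Prony recovery: the (at most $s$) nodes $z_{\bf k}$ are the roots of the Prony polynomial whose coefficient vector lies in the kernel of the $s\times(s+1)$ Hankel matrix assembled from $m_0,\ldots,m_{2s-1}$; each node then determines its frequency ${\bf k}$ by inverting the injective map ${\bf k}\mapsto e^{-i{\bf k}\cdot{\bf h}}$, and the coefficients $c_{\bf k}$ are obtained from a nonsingular Vandermonde system, whence $p({\bf k})=c_{\bf k}z_{\bf k}^{s-1/2}$. Collecting the nonzero $p({\bf k})$ reconstructs $P$. The main obstacle is the node distinctness of the previous step; the Hankel and Vandermonde solves are classical and I would not belabor them.
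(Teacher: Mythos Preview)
Your proposal is correct and follows exactly the route the paper intends: the corollary is stated without a separate proof, merely as a consequence of the Prony-on-a-line step already isolated in the proof of Theorem~\ref{maintheorem}, and you have reproduced that step (restrict $P$ to $\RR{\bf h}$, use rational independence of $1,h_1,\ldots,h_n$ for node distinctness, then apply the classical Prony/Vandermonde recovery). Both the reduction and the identification of node distinctness as the only nontrivial point match the paper's argument.
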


\section{Simulations}
The following algorithm for sparse wavelet signal recovery is proposed in the proof of Theorem \ref{maintheorem}.

\noindent {\bf Algorithm 1:}
\begin{itemize}
\item[{1.}] Input sparsity vector ${\bf s}=(s_0,\cdots,s_{J-1})$.
\item[{2.}] Input Fourier measurements $\hat f(\pmb\xi)$, $\pmb\xi\in\Omega$ 
    and set $f_{J-1}=f$. 
\item[{3.}] {\bf for} $j=J-1$ to $0$ {\bf do}
\item[{}] \quad {\bf for} every $\pmb\gamma\in\Gamma_j$ {\bf do}
\item[{}] \qquad {\bf for} every $m'=0,\cdots,M-1$ {\bf do}
 \item[{}] \qquad \quad  3a)\ \ $\eta_j(\pmb\gamma,m')=\pi{\bf M}^{-j}\pmb\gamma+2\pi{\bf p}_{m'}$.
\item[{}] \qquad \quad 3b)\ \ Solve the linear system
\begin{eqnarray*}
\hskip-0.2in&\hskip-0.1in&\hskip-0.1in\big(\widehat f_j\big({\bf M}^{j}\eta_j(\pmb\gamma,m')+2\pi{\bf M}^{j+1}{\bf k}\big)\big)_{{\bf k}\in \Lambda (M^{-1}\eta_j(\pmb\gamma,m'))}\\
\hskip-0.2in&\hskip-0.1in=&\hskip-0.1in A(j,\pmb\gamma,m')\big(\widehat\Phi({\bf M}^{-1}\eta_j(\pmb\gamma,m')+2\pi{\bf k})\big)_{{\bf k}\in \Lambda (M^{-1}\eta_j(\pmb\gamma,m'))}
\end{eqnarray*}
\qquad \qquad \quad to get
\begin{eqnarray*}
&&\hskip-0.1in A(j,\pmb\gamma,m'):={\widehat{\bf a}_j}^{T}(\pi{\bf M}^{-j}\pmb\gamma) {\bf G}_0({\bf M}^{-1}\eta_j(\pmb\gamma,m'))\\
&\hskip0.05in&+\sum_{m=1}^{M-1} {\widehat{{\bf b}}_{m, j}}^{T}(\pi{\bf M}^{-j}\pmb\gamma){\bf G}_m({\bf M}^{-1}\eta_j(\pmb\gamma,m')).
\end{eqnarray*}
\item[{}] \qquad {\bf end for}
\item[{}] \qquad 3c)\ \ Solve the linear equation
{\small 
\begin{eqnarray*}
&&\big({\widehat{\bf a}_j}^{T}(\pi{\bf M}^{-j}\pmb\gamma), {\widehat{{\bf b}}_{1, j}}^{T}(\pi{\bf M}^{-j}\pmb\gamma),\cdots, {\widehat{{\bf b}}_{M-1, j}}^{T}(\pi{\bf M}^{-j}\pmb\gamma)\big)\\
&&\hskip0.1in\times{\bf G}(h\pi{\bf M}^{-j-1}\pmb\gamma)=\big(A(j,\pmb\gamma,0),\cdots,A(j,\pmb\gamma,M-1)\big).
\end{eqnarray*}
}
\item[{}] \quad {\bf end for}

\item[{}]\quad 3d)\ \  Recover ${\bf b}_{m, j}$ from $\widehat{{\bf b}}_{m, j}(\pi{\bf M}^{-j}\pmb\gamma)$, $\pmb\gamma\in\Gamma_j$ with Prony's method for every $1\le m\le M-1$.
\item[{}] \quad 3e)\ \ Subtract $\sum_{m=1}^{M-1}\widehat{{\bf b}}^{T}_{m, j}({\bf M}^{-i}\pmb \xi)\widehat{\Psi}_m({\bf M}^{-i}\pmb \xi)$ from $\hat f_j(\pmb\xi)$ to get $\hat f_{j-1}(\pmb\xi)$, $\pmb \xi\in\Omega$.
\item[{}] {\bf end for}
\item[{4.}] Recover ${\bf a}_0$ from $\widehat{{\bf a}}(\pi\pmb\gamma)$, $\pmb\gamma\in\Gamma_0$ with Prony's method.
\item[{5.}] Reconstruct the sparse wavelet signal
 \begin{eqnarray*}
 f({\bf t})&\hskip-0.1in=&\hskip-0.1in\sum_{{\bf k}\in\ZZ^n}{\bf a}_0^T({\bf k})\Phi({\bf t}-{\bf k})\\
 &&+\sum_{j=0}^{J-1}\sum_{m=1}^{M-1}\sum_{{\bf k}\in\ZZ^n}{\bf b}_{m,j}^T({\bf k})M^j\Psi_m({\bf D}^{j}{\bf t}-{\bf k}).
 \end{eqnarray*}
\end{itemize}

Next we present simulations to demonstrate the above algorithm for  perfect reconstruction of sparse wavelet signals of finite levels.
Let
$\phi_1(t)=\chi_{[0,1)}(t)$ and $\phi_2(t)=2\sqrt{3}(t-1/2)\chi_{[0,1)}(t)$ be scaling functions, and let
$$\psi_1(t)=(6t-1)\chi_{[0,1/2)}(t)+(6t-5)\chi_{[1/2,1)}(t),$$
and
$$\psi_2(t)= 2\sqrt{3}(2t-1/2)\chi_{[0,1/2)}(t)-2\sqrt{3}(2t-3/2)\chi_{[1/2,1)}(t)$$
be wavelet functions.
Consider reconstructing the  sparse signal 
\begin{eqnarray}\label{signaldefi}
f(t)&\hskip-0.1in=&\hskip-0.1in {\bf a}_{0}^{T}(2)\Phi(t-2)+{\bf a}_0^{T}(4)\Phi(t-4)\nonumber\\
&&\hskip-0.1in+{\bf b}^{T}_{0}(1)\Psi(t-1)+{\bf b}^{T}_{0}(5)\Psi(t-5)\nonumber\\
&&\hskip-0.1in+{\bf b}^{T}_{1}(6)\Psi(2t-6)+{\bf b}^{T}_{1}(12)\Psi(2t-12)
\end{eqnarray}
from its Fourier measurements on the sampling set
\begin{equation*}
\Omega=\Big\{- \frac{\sqrt{2}}{128} n \pi+2k\pi:\ n=\pm 1, \pm 3 \ {\rm and}  \ k=0,\pm1,\pm2,4\Big\}
\end{equation*}
in \eqref{omega.def}, 
where $\Phi=(\phi_1,\phi_2)^{T}$,
$\Psi=(\psi_1, \psi_2)^T$, and the nonzero components of ${\bf a}_{0}$, ${\bf b}_{0}$ and ${\bf b}_{1}$ are randomly chosen in $[-1,1]\setminus(-0.1,0.1)$, see Figure \ref{signal.fig}.
Applying the proposed algorithm, 
our numerical results support the conclusion on perfect recovery of sparse wavelet signals 
 from their Fourier measurements on $\Omega$.
 \begin{figure}[h]
 \begin{center}
\includegraphics[width=42mm, height=30mm]{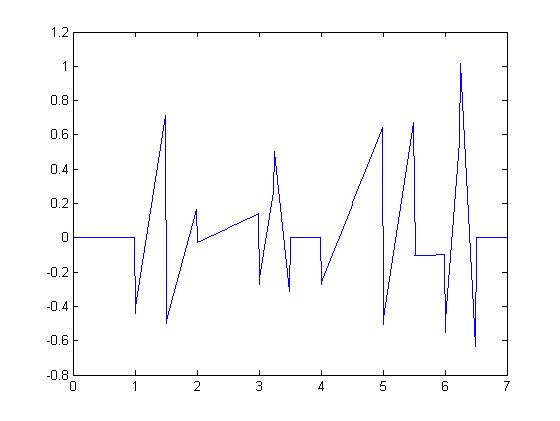}
\includegraphics[width=42mm, height=30mm]{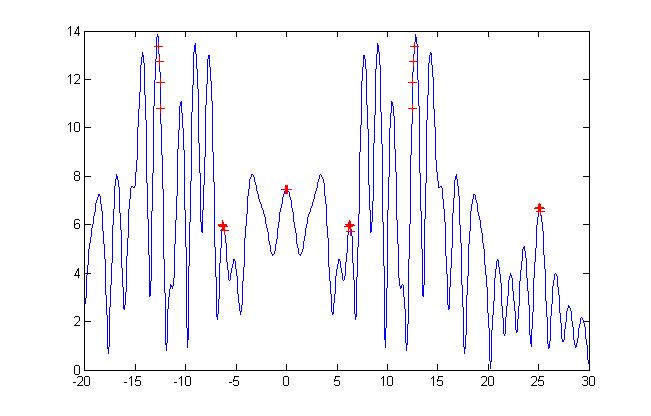}
\caption{Plotted on the left is the sparse wavelet signal $f$ in \eqref{signaldefi}, while on the right is the magnitude of its Fourier transform and the measurements on $\Omega$.}
\label{signal.fig}
\end{center}
\end{figure}

The proposed algorithm is tested when the Fourier measurements of the signal $f$ are corrupted by random noises $\epsilon$,
$$h(\xi)=\hat f(\xi)+\epsilon(\xi),\ \ \xi\in\Omega.$$
In this case, sparsity locations obtained by Prony's method in the algorithm are not necessarily integers, but it is observed that they are not far away from the sparsity locations of the signal $f$, when the signal-to-noise-ratio (SNR),
$$SNR=-20\log_{10}{\frac{\max_{\xi\in\Omega}|\epsilon(\xi)|}{\max_{\xi\in\Omega}|\hat f(\xi)|}}$$
is above 50 dB. Taking nearest integers of those locations may perfectly recover the sparsity positions $\{2,4\}$ for the scaling component of level $0$, $\{1,5\}$ for the wavelet component of level $0$, and $\{6,12\}$ for the wavelet component of level $1$. Then the signal $f$ can be reconstructed by the proposed algorithm approximately, see Figure \ref{error.fig}. 
 \begin{figure}[h]
 \begin{center}
\includegraphics[width=68mm, height=30mm]{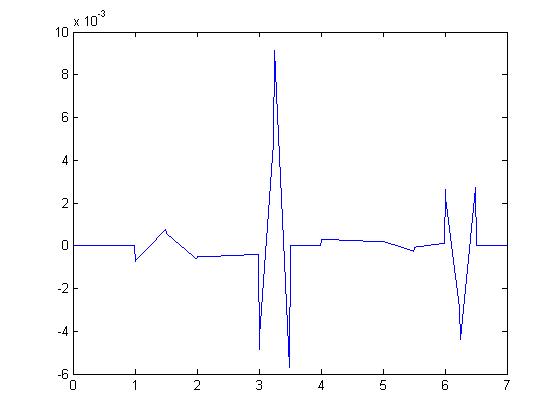}
\caption{The difference between the original signal and the reconstructed signal. In this figure, the reconstruction is generated by the proposed algorithm with modified Prony's method, and the noise level on Fourier measurements is SNR=50 dB. }
\label{error.fig}
\end{center}
\end{figure}

We also tested our proposed algorithm for two-dimensional wavelet signals with dilation
${\bf D}=\left(\hskip-0.05in \begin{array}{cc}
0 & -2\\
1 & 0\\
\end{array}\hskip-0.05in\right)$. Presented on the left of Figure \ref{2dsignalfourier.fig} is the amplitude of a sparse wavelet signal
\begin{eqnarray}\label{twodimensionalsignaldefi}
\hskip-0.3in f(t_1,t_2)&\hskip-0.1in=&\hskip-0.1in  a_0\phi(t_1-1,t_2)+ a_1\phi(t_1-2,t_2-3)\nonumber\\
&\hskip-0.1in+&\hskip-0.1in b_{0}\psi(t_1-2,t_2-1)+ b_{1}\psi(t_1-3,t_2-5),
\end{eqnarray}
where $a_0, a_1, b_0, b_1\in [-1,1]\setminus(-0.1,0.1)$ are selected randomly, the scaling function is
$\phi(t_1,t_2)=\chi_{[0,1)}(t_1)\chi_{[0,1)}(t_2)$, and the wavelet function is
$\psi(t_1,t_2)=\chi_{[0,1)}(t_1)(\chi_{[0,1/2)}(t_1)-\chi_{[1/2,1)}(t_2))$.
Our simulations show that the signal $f$ in \eqref{twodimensionalsignaldefi} can be reconstructed from its Fourier measurements on
\begin{equation}\label{omegatwodimensional}
\Omega=\Big\{\Big(\frac{\sqrt{2}}{64} n+2k, \frac{\sqrt{3}}{64} n+2l\Big)\pi,\ n=\pm 1, \pm 3 \ {\rm and}\   k,l=0, 1\Big\},\end{equation}
which is plotted on the right of Figure \ref{2dsignalfourier.fig}.
\begin{figure}[h]
 \begin{center}
\includegraphics[width=42mm, height=30mm]{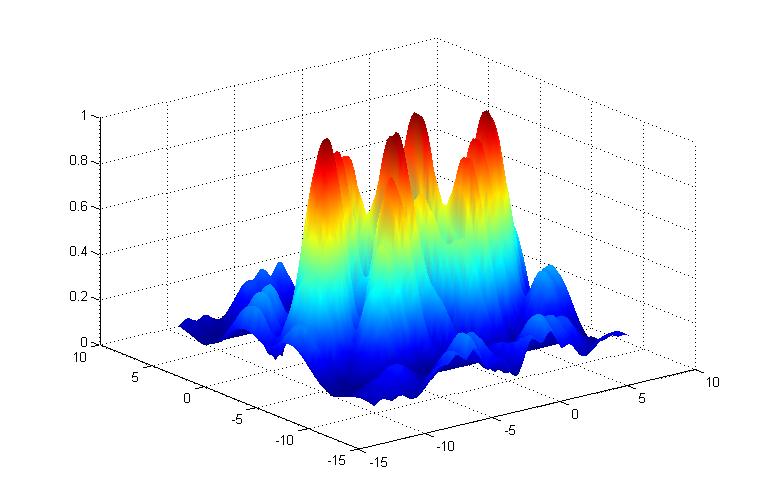}
\includegraphics[width=42mm, height=30mm]{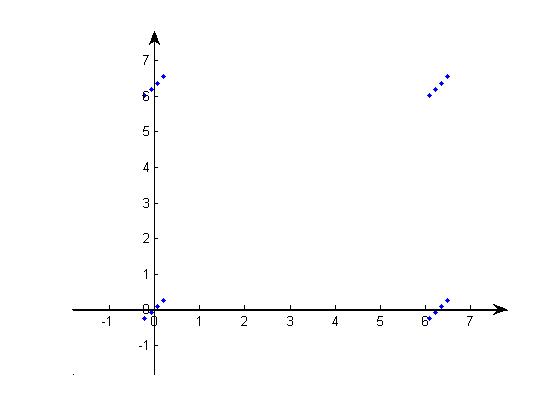}
\caption{Fourier amplitudes of the signal $f$ in \eqref{twodimensionalsignaldefi} and the sampling set $\Omega$
 in \eqref{omegatwodimensional}
 for sparse recovery. }
\label{2dsignalfourier.fig}
\end{center}
\end{figure}

\section{Conclusion} 

In this paper, we show that  sparse  wavelet signals of finite level
can be reconstructed from  their Fourier measurements on a deterministic sampling set, whose
cardinality is  independent on signal dimension and almost proportional to signal sparsity. A difficult problem on this aspect is
exact reconstruction of signals having sparse wavelet-like (e.g. wavelet packet, framelet, curvelet,  and shearlet) representations
from their partial Fourier information (\cite{coifman92, ronshen97,  candes06siam,  chui06, kutyniokbook}).

\end{document}